\documentclass[letterpaper,cleveref, autoref, thm-restate]{lipics-v2021}
\bibliographystyle{plainurl}
\keywords{Transaction ordering; First-come-first-serve; First-price auctions.}

\usepackage[utf8]{inputenc}
\usepackage{amsmath}
\usepackage{amssymb}
\usepackage{amsthm}
\usepackage{comment}
\usepackage[dvipsnames]{xcolor}
\usepackage{cite}
\usepackage{xspace}

\newtheorem{property}[theorem]{Property}

\title{Buying Time: Latency Racing vs. Bidding for Transaction Ordering}

\author{Akaki Mamageishvili}{Offchain Labs}{amamageishvili@offchainlabs.com}{}{}
\author{Mahimna Kelkar\footnote{This work was completed in the author's role at Offchain Labs.}}{Cornell University}{mahimna@cs.cornell.edu}{}{}
\author{Jan Christoph Schlegel}{City, University of London}{jchschlegel@gmail.com}{}{}
\author{Edward W. Felten}{Offchain Labs}{ed@offchainlabs.com}{}{}

\Copyright{Akaki Mamageishvili and Mahimna Kelkar and Jan Christoph Schlegel and Edward W. Felten}

\EventEditors{Joseph Bonneau and S. Matthew Weinberg}
\EventNoEds{2}
\EventLongTitle{5th Conference on Advances in Financial Technologies (AFT 2023)}
\EventShortTitle{AFT 2023}
\EventAcronym{AFT}
\EventYear{2023}
\EventDate{October 23-25, 2023}
\EventLocation{Princeton, NJ, USA}
\EventLogo{}
\SeriesVolume{282}
\ArticleNo{23}

\authorrunning{A. Mamageishvili, M. Kelkar, J.C. Schlegel, and E.W. Felten}

\begin{CCSXML}
<ccs2012>
   <concept>
       <concept_id>10003752.10010070.10010099.10010100</concept_id>
       <concept_desc>Theory of computation~Algorithmic game theory</concept_desc>
       <concept_significance>300</concept_significance>
       </concept>
 </ccs2012>
\end{CCSXML}

\ccsdesc[300]{Theory of computation~Algorithmic game theory}

\begin{document}
\nolinenumbers
\maketitle

\newcommand{\mypara}[1]{\smallskip\noindent\textbf{#1}\hspace*{0.5em}}

\newcommand{\sysname}{\textnormal{\textsf{TimeBoost}}\xspace}

\newcommand{\txset}{\mathcal{T}}
\newcommand{\tx}{\mathsf{tx}}

\newcommand{\Ex}{E}

\begin{abstract}
    We design \sysname: a practical transaction ordering policy for rollup sequencers that takes into account both transaction timestamps and bids; it works by creating a score from timestamps and bids, and orders transactions based on this score. 
    
    \sysname is transaction-data-independent (i.e., can work with encrypted transactions) and supports low transaction finalization times similar to a first-come first-serve (FCFS or pure-latency) ordering policy. At the same time, it avoids the inefficient latency competition created by an FCFS policy. It further satisfies useful economic properties of first-price auctions that come with a pure-bidding policy. We show through rigorous economic analyses how \sysname allows players to compete on arbitrage opportunities in a way that results in better guarantees compared to both pure-latency and pure-bidding approaches.   
\end{abstract}

\section{Introduction}
\label{sec:introduction}
Transaction ordering is critically important for financial systems---the order in which user transactions are executed can directly impact the profits made by users. This motivates the study of designing transaction ordering policies with useful properties.

In this work, we focus on ordering policies for \textit{centralized sequencers}---meaning that a single \textit{sequencer} receives transactions from users and publishes an ordered sequence to be used for execution. A transaction ordering policy here specifies how the resulting output sequence depends on the contents and arrival times of transactions at the sequencer. Our work provides rigorous economic analyses to justify the utility of our proposed policy. 

\mypara{Why consider a centralized sequencer?}
In addition to the centralized sequencer setting being a potentially simpler model to study as a first step, there are two other main reasons why we choose to do so in this work:

\begin{enumerate}
    \item \underline{\textit{Existing use-cases are already centralized.}}
     Decentralized blockchains such as Ethereum are still \textit{ephemerally centralized} with respect to ordering---for a given block, similar to a centralized sequencer, only a single miner/validator is in complete control of the inclusion and ordering of transactions within the block.     
     Similarly, current layer-2 ``rollup'' protocols (such as Arbitrum and Optimism) also employ a centralized sequencer to order  transactions in a batch posted to the underlying Ethereum base-chain.

    \item \textit{\underline{Ordering policies are mostly orthogonal to the problem of sequencer decentralization}}.
    
    While decentralizing the sequencer is an important active research direction, we note that a suitable transaction ordering policy can be chosen orthogonally to the method of sequencer decentralization. In particular, the decentralized protocol can first be used to agree on single \textit{pre-ordering} or \textit{scoring} of transactions, following which a specific ordering policy can be applied. In other words, the output of the decentralized protocol can be thought of simulating the input of a virtual centralized sequencer on which the ordering policy gets applied. 
    
    \hspace*{1em} An example of this is seen in the recent line of works on fair-ordering ~\cite{kelkar2020order,kelkar2021themis,cachin2022quick,zhang2020oligarchy,kursawe2020wendy}---they can be thought of as a decentralized implementation of a first-come-first-serve ordering policy which combines local transaction orderings from many nodes.
\end{enumerate}

Furthermore, while current centralized sequencer implementations are semi-trusted in that they receive transactions in plaintext and are expected not to deviate from the specified ordering policy or insert transactions of their own, we note that transaction data can be hidden from the sequencer by using threshold decryption by a committee (i.e., the sequencer only sees encrypted transactions and orders them, only after which a committee decrypts the plaintext) or trusted hardware (such as Intel SGX). Through these techniques, the adversarial behavior of the sequencer can be substantially restricted.

The study of ordering policies is important even when the sequencer is trusted (or is suitably constrained as mentioned above) due to the presence of other profit-seeking entities in the system. For instance, after the sequencer publishes state after execution of previous transaction(s), arbitrage opportunities can be created; players in the system will compete with each other to take advantage of these opportunities. Similar situations can also arise due to state updates from external systems. 

\subsection{Existing Ordering Policies}

Ordering policies used on blockchains today fall roughly into three categories described below. 

\mypara{First-come first-serve (FCFS).}
One natural ordering policy is the first-come, first-serve (FCFS) rule. Here, transactions are sequenced in the same order that they were received from users. There are several advantages to FCFS: to begin, it is simple to implement and seems intuitively fair---after all, it is a commonly used policy even for real-world interactions. FCFS also minimizes transaction latency: transactions can be continuously sequenced as they arrive, and do not need to conform to the discrete granularity of blocks. The sequencer in the layer-2 rollup Arbitrum employs an FCFS policy.

One major disadvantage of FCFS however, is that creates \textit{latency competition} in the sense that entities are incentivized to position themselves as close to the sequencer as possible in order to be the first to react to any new market information. This is a well known and studied problem within traditional financial systems. Indeed, high frequency trading (HFT) firms invest millions of dollars into low-latency infrastructure that can operate sub-microsecond or even finer scales; their trading accounts for roughly half of all trading volume~\cite{latency_cost_hft}. This inclination to latency investment is highly inefficient since the investment happens externally to the system (as opposed to bidding; see below) and therefore cannot be used beneficially within the system. Recent works~\cite{str_lat_red, str_peer_select} have also shown the potential for similar strategic manipulation within a pure FCFS protocol in the decentralized setting.

One crucial point to emphasize here is that this latency competition in FCFS \textit{does not disappear even if transaction data is hidden} (e.g., transactions are encrypted). This is because any state changes (from the sequencer or even from external systems) can trigger a profit opportunity wherein it is beneficial to have the quickest access to the sequencer. As a specific example, an update on the trading price of a token can create an arbitrage opportunity whose profit will go only to the player who can submit its transaction to the sequencer first\footnote{Another approach if the sequencer broadcasts state information in a random order to clients is to create many dummy client copies, thereby increasing the chances that some copy gets the feed faster.}. This kind of latency-based arbitrage has already been seen in Arbitrum, which implements a centralized FCFS sequencer. 

\mypara{Per-block transaction bidding.}
A second natural policy is to group transactions into blocks, then order transactions within a block based on their \textit{bid}. Specifically, each transaction is submitted along with a fee or bid; the sequencer now collects all transactions submitted within some time interval and sequences them by the descending order of their bids. This essentially simulates a first-price all-pay auction~\cite{krishna2002auctionbook} (i.e., players bid independently; the highest bid wins but all players need to pay their bid amount) to take advantage of a particular arbitrage opportunity. Since players submit their bids independently, the bidding policy can work as expected even when transactions are encrypted (since state or market updates create arbitrage opportunities).

One advantage of a bidding policy (compared to FCFS) is that the payment is internal to the system and therefore can be utilized within it to e.g., subsidize protocol operation costs.

When the block-time is large (e.g., 12$s$ as in Ethereum), it is expected that for almost all arbitrage opportunities, all interested players can post their bid within the time interval in an attempt to take advantage of the opportunity. However, when the block-time is small (this is typically the case in layer-2 protocols to increase scalability), perhaps surprisingly, having a connection with lower latency can provide a substantial advantage. This is because when the market update happens close to end of the block time, only players with a faster connection will be able get their transaction included in the block; consequently, they may be able to take advantage of the arbitrage opportunity with a smaller (or even a zero) bid.

Looking ahead, our \sysname policy (which combines both arrival times and bidding) will enable arbitrageurs to prefer bidding even when block times are small, thereby allowing the protocol to capture this value rather than it being lost to external latency infrastructure. 

\mypara{Block or MEV auctions.}
A third widely-used policy auctions off the complete rights to choose and order transactions within a block. Here, the sequencer does not order transactions itself but rather accepts block proposals from external players (often called block \textit{builders}) and chooses the proposal from the builder who pays the most. These auctions initially arose from the realization that significant profit (often referred to as maximal (previously miner) extractable value or MEV~\cite{daian2020flashboys,babel2023clockwork}) can be extracted by manipulating the ordering of user transactions. In the past two years, through companies such as Flashbots and Bloxroute, an MEV marketplace has been created on Ethereum \textit{outside of the protocol} to connect block proposers (entities in charge of proposing or sequencing a block) to block builders (players who find MEV opportunities and order user transactions to take advantage of them)---the result has been the extraction of hundreds of millions of dollars in profit from user transactions~\cite{qin2021bev, wahrstatter2023time}. 

While some MEV (such as arbitrage, which provides incentives for price discovery) is benign and can be done without the knowledge of user transactions, other forms of MEV extraction crucially rely on the transaction data. Recent works~\cite{wahrstatter2023time,yang2022mevsok} have shown such MEV to be significantly detrimental to users. The emergence of such MEV extraction has largely been attributed to the rationality of block proposers as well as the lack of regulation. For example, in traditional financial systems, it is often illegal or at the very least heavily constrained to profit from the knowledge of user transactions (for instance, payment-for-order-flow (PFOF): the selling of user transaction data is illegal in the UK, and, while legal in the US, still requires users to be provided with guarantees of ``best execution'').

A design goal for our work is therefore to design ordering policies that are data-independent, i.e., they do not use transaction data for ordering. This will allow them to be used even when transactions are encrypted at the time of sequencing.

\subsection{Our contributions}
\mypara{\sysname: An ordering policy that combines FCFS and bidding.}
We propose \sysname, an ordering policy that combines both FCFS-style timestamps and first-price auction style bids. Below, we describe several natural goals that went into our design.

\begin{enumerate}
    \item \textbf{Data independence.} The policy should not utilize the transaction data for ordering. This is a natural goal in order to support encrypted transactions and prevent data-dependent MEV attacks on transaction ordering.
    \item \textbf{Low finalization time.} The policy should be able to sequence transactions within a short time $g$ (the specific parameter can be set according to the application). This is important to improve the user experience with the system since transactions will be sequenced within time $g$ after they are received. 
    \item \textbf{Independence of irrelevant transactions.} The ordering between two given transactions should not depend on the presence of other transactions. This is useful to prevent an adversary from inserting irrelevant transactions that results in flipping the ordering between two target transactions. Importantly, this property also ensures that a transaction submitter's strategy need only consider transactions that are relevant to the party's goals---for example if Alice is trying to capture a particular arbitrage opportunity, she need only worry about other transactions affecting that opportunity.
    
    \item \textbf{Inclination to spending via bids instead of latency infrastructure.}
    As mentioned before, investments into latency infrastructure are highly inefficient from the system standpoint since the value spent cannot be utilized effectively by the system. Therefore, a natural goal is to disincentivize latency investment and instead incentivize players to bid for their transactions. Looking ahead, perhaps surprisingly, we find that the pure bidding policy results in a larger latency competition than our \sysname policy which combines bidding with FCFS style timestamps.
\end{enumerate}

\mypara{\sysname details.}
Intuitively, \sysname works by assigning \textit{scores} to transactions based on both their arrival times and their bid. The final ordering is taken to be descending in the transaction scores. More specifically, for a transaction with arrival time $t$ and bid $b$, \sysname assigns it the score $S(\tx) = \pi(b) - t$ where intuitively $\pi$ represents a function for ``buying time''---by increasing the transaction bid, users can reduce their effective timestamp (or equivalently, increase their score). Section~\ref{sec:timeboost} describes how to choose the function $\pi$. 

Importantly, there is a limit to how much time can be ``bought'' through the bid---in particular, no transaction can outbid a transaction received some $g$ time earlier. Such a property is required to ensure the quick finalization of user transactions.  At the same time, transactions received less than $g$ time before can always be outbid; this means that arbitrageurs always have $g$ time to compete for any arbitrage opportunity as opposed to a pure bidding policy and will therefore prefer bidding over latency infrastructure investments. 

We also show that \sysname satisfies all the useful economic properties of first-price all-pay auctions. Further, we show that players spend exactly the same amount in total with \sysname, as they would spend if only latency investment was allowed, except that most of the investment is done through bidding and therefore can be captured within the protocol for e.g., lowering user fees or for protocol development.
\newcommand{\policy}{\mathbb{P}}

\section{Ordering Policies}\label{sec:ordering policies}

\subsection{Preliminaries}
A transaction $\tx$ that arrives at the sequencer can be characterized by a tuple $(\textsf{data}, t, b)$ where $\textsf{data}$ represents the transaction data, $t$ denotes the arrival time, and $b$ denotes the transaction bid (note that when transactions are of different sizes, $b$ can be instead be considered to be a bid per unit size). Let $\txset$ denote the set of all possible transactions; in principle this can be infinite or even uncountable (e.g., if arrival times are in $\mathbb{R}^+$) and our results do hold for these cases. For practical use-cases, typically, arrival times can be assumed to be in $\mathbb{Q}^+$ and bids can be assumed to be in $\mathbb{N}^{\geq 0}$.

An ordering policy now defines how a sequencer orders a finite set $\mathcal{T}'$ of transactions that it has received. A formal definition is given below:

\begin{definition}[(Data-Independent) Ordering Policy]
    An \emph{ordering policy} (or algorithm) $\policy$ takes as input a finite subset $\mathcal{T}' \subseteq \txset$ of transactions and outputs a linear ordering $\policy(\mathcal{T}')$. For 
    $\tx \in \mathcal{T}'$, let $\policy(\mathcal{T}', \tx)$ denote the position of transaction $\tx$ in the ordering $\policy(\mathcal{T}')$. In other words, given $\mathcal{T}'$ and $\tx_a, \tx_b \in \mathcal{T}'$, $\policy$ outputs $\tx_a$ before $\tx_b$ if $\policy(\mathcal{T}', \tx_a) < \policy(\mathcal{T}', \tx_b)$. 
     
    A policy is further called data-independent if it does not make use of the transaction data (i.e., it only uses the arrival time and the bid).
\end{definition}

Since we want our ordering policies to not be based on the transaction content, we only consider data-independent policies for the rest of the paper. For simplicity, we can therefore represent a transaction $\tx$ simply by the tuple $(\tx.t, \tx.b)$. Furthermore, since ties can be broken by some chosen technique, without loss of generality, we can also assume $(\tx.t, \tx.b)$ tuples are unique. While the tie-breaking can be dependent on e.g., transaction ciphertext or metadata, this does not affect our analysis and therefore can be safely ignored for the purpose of our paper.

\subsection{Independence of Irrelevant Transactions (IIT)}

A useful property for our ordering policy to have is to prevent the ordering decision between transactions $\tx_a$ and $\tx_b$ to change depending on what other transactions are being ordered; in other words, the ordering decision should not depend on irrelevant transactions. Intuitively, this is done to ensure that an adversary cannot create dummy transactions in order to flip the ordering decision between two transactions, and so that a party's bidding strategy can ignore transactions irrelevant to that party. We define this property of independence of irrelevant transactions (IIT) below. 

\begin{definition}[Independence of Irrelevant Transactions]
\label{def:iit}
We say that a policy $\policy$ satisfies {\it independence of irrelevant transactions} (IIT) if for any pair of transactions $\tx_a, \tx_b$ and any pair of finite subsets $\mathcal{T}_1,\mathcal{T}_2\subset\mathcal{T}$, the following holds:
\begin{align*}
&&\policy&(\{\tx_a, \tx_b\}\cup \mathcal{T}_1, \tx_a)< \policy(\{\tx_a, \tx_b \}\cup \mathcal{T}_1, \tx_b) \\
&&\Leftrightarrow 
\policy&(\{\tx_a, \tx_b\}\cup \mathcal{T}_2, \tx_a)<\policy(\{\tx_a, \tx_b\}\cup \mathcal{T}_2, \tx_b).
\end{align*}
\end{definition}

\subsection{IIT Implies a Score-Based Policy}

We now show that the IIT property implies that a score-based policy needs to be used---that is, also needs to be independent of the set $\mathcal{T}'$ being ordered. 

Intuitively, a score-based policy works as follows: for transaction $\tx$, it assigns a score $S(\tx)$ based only on the arrival time $\tx.t$ and the bid $\tx.b$. Here too, scoring ties can be broken in a pre-specified manner. The output sequence is then taken to the descending order of transaction scores. Score-based policies are formally defined below:

\begin{definition}[Score-based policy]
A score is a function $S:\mathcal{T}\to \mathbb{R}$ that assigns to each possible transaction $\tx\in\mathcal{T}$ a score $S(\tx)$. An ordering policy $\policy$ is called score-based if there exists a score function $S$ such that $\policy$ sorts transactions according to $S$. In other words, there exists $S$ such that for any $\mathcal{T}' \subseteq \mathcal{T}$ and $\tx_a,\tx_b\in \mathcal{T}'$, it holds that $\policy(\mathcal{T}',\tx_a)<\policy(\mathcal{T}',\tx_b)$ if and only if $S(\tx_a)>S(\tx_b)$.
\label{def:score}
\end{definition} 

For finite $\txset$, we can directly show that IIT implies score-based policies. To show the result for infinite sets, we need to employ the following set-theoretic axiom (defined below) by Cantor~\cite{cantor_axiom}. Similar definitions have also been used in  in the context of utility theory~\cite{debreu}

\begin{property}[Cantor's Axiom~\cite{cantor_axiom}]
We say that a pair $(\policy,\mathcal{T})$ satisfies Cantor's axiom if there exists a {countable} set $\mathcal{T}'\subseteq\mathcal{T}$ such that for any pair of transactions $\tx_a, \tx_b\in\mathcal{T}$ there exists an instance of $\policy$ in which some transaction in $\mathcal{T}'$ is ordered between $\tx_a$ and $\tx_b$. 

Formally there is a finite set $\mathcal{T}''\subset \mathcal{T}$ with $\tx_a, \tx_b \in\mathcal{T}''$ and a $\tx_c \in\mathcal{T}'\cap\mathcal{T}''$ (possibly $\tx_c=\tx_a$ or $\tx_c = \tx_b$) such that

$$\policy(\mathcal{T}'',\tx_a)\leq \policy(\mathcal{T}'',\tx_c)\leq \policy(\mathcal{T}'',\tx_b),$$ 
$$\textrm{or}$$ $$\policy(\mathcal{T}'',\tx_b)\leq \policy(\mathcal{T}'',\tx_c)\leq \policy(\mathcal{T}'',\tx_a).$$
\label{prop:cantor}
\end{property}

We can now establish the following correspondence between IIT and score-based policies.

\begin{theorem}[IIT $\Leftrightarrow$ Score-Based]\label{iit_score}
Let $\txset$ denote the set of all transactions. The following hold for any ordering policy $\policy$:
\begin{enumerate}
\item If $\txset$ is countable, then $\policy$ satisfies IIT if and only if it is score-based.
\item If $\txset$ is uncountable and $(\policy,\txset)$ satisfies Cantor's axiom, then $\policy$ satisfies IIT if and only if it is score-based.
\end{enumerate}
\end{theorem}

\begin{proof}
It is straightforward to see that a score-based algorithm satisfies the independence of irrelevant transactions (since the score of a transaction depends only on itself and not other transactions).

For the opposite direction, we first prove the second part of the theorem (the uncountable case).
We define an order $\prec$ over $\mathcal{T}$ where 
$$\tx_a\prec \tx_b:\Leftrightarrow \policy(\{\tx_a, \tx_b\},\tx_a)<\policy(\{\tx_a, \tx_b\},\tx_b).$$
Since $\policy(\{\tx_a, \tx_b\})$ is a well-defined for any two transactions $\tx_a, \tx_b \in\mathcal{T}$, the order $\prec$ is complete and anti-symmetric. By independence and since  $\policy(\{\tx_a, \tx_b, \tx_c\})$ is a well-defined order for any three transactions $\tx_a, \tx_b, \tx_c \in\mathcal{T}$ we have
\begin{align*}
&\tx_a\prec \tx_b\prec \tx_c\\\Rightarrow &(\policy(\{\tx_a,\tx_b\},\tx_a)<\policy(\{\tx_a,\tx_b\},\tx_b)\text{ and }\policy(\{\tx_b,\tx_c\},\tx_b)<\policy(\{\tx_b,\tx_c\},\tx_c))\\\Rightarrow &\policy(\{\tx_a,\tx_b,\tx_c\},\tx_a)<\policy(\{\tx_a,\tx_b,\tx_c\},\tx_b)<\policy(\{\tx_a,\tx_b,\tx_c\},\tx_c)\\\Rightarrow& \policy(\{\tx_a,\tx_c\},\tx_a)<\policy(\{\tx_a,\tx_c\},\tx_c)\\\Rightarrow &\tx_a\prec \tx_c\end{align*}
Therefore, $\prec$ is transitive. We let $\tx_a\preceq \tx_b$ iff $\tx_a\prec \tx_b$ or $\tx_a=\tx_b$.

 The Cantor axiom and independence imply that there is a countable $\mathcal{T}'\subset \mathcal{T}$ so that the order $\prec$ satisfies that for any $\tx_a,\tx_b\in\mathcal{T}$ there is a $\tx_c\in\mathcal{T}'$ such that
 $$\tx_a\prec \tx_b\Rightarrow \tx_a\preceq \tx_c\preceq \tx_b$$

 By Theorem 1.1 in~\cite{chambers2016}, this, in turn, implies that there is a numerical representation of the order $\prec$ which is a score $S:\mathcal{T}\rightarrow \mathbb{R}$ such that for any two transactions $\tx_a,\tx_b\in\mathcal{T}$ we have $\tx_a\prec \tx_b$ if and only if $S(\tx_a)>S(\tx_b)$.

 For the first part of the theorem, note that the previous argument also works for a countable $\mathcal{T}$ and in that case we can choose  $\mathcal{T}'= \mathcal{T}$ where the Cantor axiom is now trivially satisfied.

\end{proof}

\begin{remark}
The above result extends to the case where the policy creates a weak ordering (which can be made strict through a tie-breaking procedure) rather than a strict ordering of transactions. In that case, Definitions~\ref{def:iit} and~\ref{def:score} are adapted to weak orders, and we get a score that might assign the same value to two different transactions. The relaxation to weak orders is useful for the case that the set of transactions is uncountable and not a subset of the real numbers (e.g.~if $\mathcal{T}=\mathbb{R}^2_+$). In that case, the Cantor axiom is impossible to satisfy for strict orders but satisfiable for weak orders.
\end{remark}

\mypara{Discussion.}
We note that in our context, assuming $\mathcal{T}$ is countable or even finite is safe, as there is a finite smallest time increment for timestamps and a finite smallest bid increment.
Moreover, the ordering policy deals with ordering transactions in a finite time interval and bids will be upper-bounded by the maximum value in the system (e.g., the maximum number of tokens). However, for the subsequent economic analysis, it will be more convenient to work with the continuum where differences in time stamps and bids can be arbitrarily small.

Having proven that score-based algorithms are essentially the only ones satisfying the independence of irrelevant transactions property, we turn to selecting the most natural one among them. Note that FCFS is the scoring function that corresponds to scoring transactions by their timestamp only while scoring transactions only by bids corresponds to the first-price auction solution. In the next section, we show how our scoring policy $\sysname$ corresponds to a simple mixture of these two strategies.

\section{\sysname Description}
\label{sec:timeboost}
We now formally define the \sysname ordering policy in this section. As mentioned before, we want \sysname to satisfy the independence of irrelevant transactions property (i.e., it needs to be a scoring function based on Theorem~\ref{iit_score}) and also provide low confirmation-latency for transactions. Therefore, we will only allow \sysname\ to consider transactions within a time $g$ interval; this granularity $g$ can be set suitably based on the particular usecase.

\mypara{Basic model.}
Suppose there are $n$ transactions in the $g$ time interval, labeled with $\tx_1,tx_2,\cdots \tx_n$, and sorted by increasing arrival time. Each transaction $\tx_i$ is characterized by a pair of a timestamp or arrival time, denoted by $t_i$, and a bid, denoted by $b_i\geq 0$. Formally, we view a transaction as a tuple of non-negative reals, $\tx_i=(t_i,b_i)\in \mathbb{R}^+\times \mathbb{R}^{\geq 0}$. 

\mypara{\sysname scoring function.} 
Intuitively, for the \sysname scoring function, we propose to allow users to ``buy time'' using their transaction bid; in other words, transactions will be sorted by increasing timestamps (as in FCFS) but now users are allowed to decrease their effective timestamp (i.e., increase their score) through bids.

Formally, the score of a transaction $\tx_i = (t_i, b_i)$ is computed as follows:
\begin{equation}
\hfill S(t_i, b_i) = \pi(b_i) - t_i. \hfill
\end{equation}
where $\pi(b_i)$ denotes the priority or advantage gained by bidding $b_i$. Transactions are now chosen in descending order of their scores.

\mypara{Choosing a bidding function $\pi$.}
To choose the bidding function $\pi$ for \sysname, we start by defining several natural properties that should be satisfied.
\begin{enumerate}
\item $\pi(0)=0$. This normalization implies that paying 0 bid gives no additional advantage.

\item $\pi'(b)>0$ for all $b \in \mathbb{R}^{+}$ where $\pi'$ denotes the first derivative of $\pi$ with respect to the bid. This implies that the priority increases with the bid, which gives incentive to bid more for a higher priority. 

\item $\lim_{b\rightarrow \infty}\pi(b) = g$. This implies that no transaction can outbid a transaction which arrived $g$ time earlier (but any time advantage of less than $g$ can be outbid). Through this, we can guarantee that the transaction ordering can be finalized within time $g$.

\item $\pi''(b)<0$ for all $b \in \mathbb{R}^{+}$ where $\pi''$ denotes the second derivative of $\pi$ with respect to the bid. This means that priority is concave, or equivalently, the cost of producing the (bidding) signal is convex. This is generally necessary to obtain the interior solution of the equilibrium condition.
\end{enumerate}

The simplest bidding function satisfying the above constraints is the function:
\begin{equation}\label{pi_formula}
\hfill \pi(b_i):=\frac{gb_i}{b_i+c} \hfill
\end{equation}
where $c$ is some constant. We will use this as the bidding function for \sysname. In the next section, we provide an economic analysis for \sysname. For this, we will assume that $c = 1$.

\mypara{Complexity.}
For any incoming transaction $\tx = (t, b)$, the sequencer can finalize $\tx$ after a delay of $g-\pi(b_i)$. This is because after this point, no later transaction can outbid $\tx$. If transactions arrive at rate $r$, the space complexity of the sequencing algorithm is $\Theta(r)$ and the computational cost per transaction is $\Theta(\log r)$, assuming pending transactions are stored in a priority queue, ranked by score.

\subsection{\sysname Economic Analysis Overview}\label{econ_analysis}

We now describe the model for analyzing the economics for our \sysname ordering policy. The next two sections will describe this analysis in detail.

\mypara{Basic model.} 
Consider an arbitrage opportunity that occurs at some time (w.l.g., this can be taken as time 0). Users (from now on referred to as players) need to now take a decision on (1) how to send their transaction to the sequencer; this corresponds to the investment in latency; and (2) how much extra to bid for their transaction to get higher priority. We will analyze a simple economic model of this decision problem. 

Assume that it costs user $i$ the amount $c_i(t)$ to get its transaction received by the sequencer $t$ time after the arbitrage opportunity arises. The only requirement on $c_i(t)$, for now, is that it is decreasing in increasing $t$. When the arbitrage opportunity arises, a player $i$ has a valuation $v_i$ to have its transaction included for execution the earliest, among those transactions contending for the same opportunity. 

\mypara{Analysis organization.}
We begin with an analysis with two players in Section~\ref{sec:2players}. Within this, we consider different models based on when the latency investment needs to occur. Broadly, we consider two models for latency investment: ex-ante (Section~\ref{subsec:exante}) and ex-post (Section~\ref{subsec:expost}). Ex-ante means that the latency investment needs to happen \textit{before} learning the arbitrage opportunity while ex-post means that the latency investment can occur after learning about the arbitrage opportunity. 

In Section~\ref{sec:nplayers}, we generalize our results to many competing players. 

\section{Analysis of \sysname with 2 Players}
\label{sec:2players}
As a starting point, assume that there are two players with valuations $v_1$ and $v_2$, distributed as per the cumulative distribution functions (CDFs) $F_1$ and $F_2$. That is, the probability that the valuation of player $i$ is less or equal to $x$ is equal to $F_i(x)$. 

For each valuation $v$, the player may choose their specific latency investment.
We can model this as a function $t_i: V \rightarrow \mathbb{R}$, 
such that, $t_i(v)$ is the latency / time chosen by a player $i$ with valuation $v$. For simplicity, assume that the cost functions and value distributions are the same: $c_i(t)=c(t)$ and $F_i =F$. Throughout the paper, we assume that $F:[0,1]\rightarrow[0,1]$ is a uniform distribution with $F_i(x)=x$ iff $x\in [0,1]$, for $i\in \{1,2\}$, when final numerical values are derived. Obtaining numerical values for different distribution functions is very similar to that of uniform distribution, but we choose a uniform for simplicity of exposition. However, most of the computations are done for general distribution functions.

We now consider two different assumptions regarding the investment in latency improvement. In the first model (ex-ante), we assume that the players need to invest in their latency infrastructure in advance: they acquire or rent servers close to the sequencer prior to knowing the value of the arbitrage opportunities they are competing for. In the second (ex-post) model, we assume that the players are able to invest in the latency after they learn about their valuation of the arbitrage. This corresponds to the case where the arbitrage opportunity itself takes some time to be realized\footnote{Example of such an opportunity is a 12-second delay on the Ethereum network for a transaction to be scheduled.}. In this case, the transaction sender can schedule its transaction through the third-party service, which guarantees the delivery of the transaction within some time interval, once the arbitrage opportunity is realized. In both cases, bidding is naturally assumed to be an interim decision, and in fact, one of the biggest advantages, as the valuation is already learned.  

\subsection{Ex-Ante Latency Investment}
\label{subsec:exante}
In this model, players learn their valuations only after they have already invested in latency infrastructure. If players can only compete through latency, the interaction between them becomes a static game. We study equilibria solutions of these games. A similar setting is considered in~\cite{all_pay_contests}. The results obtained in the following two subsections are concrete cases of folk results in the microeconomic theory; however, we include their proofs for completeness.

\subsubsection{Only latency investment}
As a simple first step, we start by analyzing the game where only latency investment is allowed. Let $x_i$ be the amount invested in latency by player $i$  (so that he obtains a delay of $t_i(x_i)$). 
Let $V_i$ denote the valuation random variable of player $i$. 
Then, player $i$ has the following ex-ante payoff: 
\begin{equation*}
\textnormal{Payoff}_i = \begin{cases}
    \Ex[V_i]-x_i & \textnormal{if player invests strictly more than the other player} \\
    \frac{1}{2}\Ex[V_i]-x_i & \textnormal{if he invests an equal amount (assuming random tie-breaking)} \\
    -x_i & \textnormal{otherwise}
\end{cases}
\end{equation*}

First, we note that there is no pure strategy Nash equilibrium solution of the game, in which player strategy sets consist of $\mathbb{R}^+$. It is easy to show this by case distinction on valuations: there are simple deviating strategies in each case. Next, we focus on the mixed equilibrium solution and obtain the following result.

\begin{proposition}
There is a symmetric equilibrium in mixed strategies where each player $i$ chooses $x_i$ uniformly at random on the interval $(0,\Ex[V_i])$.
\end{proposition}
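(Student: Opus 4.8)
The plan is to recognize this game as a two-player complete-information all-pay auction in which both contestants value the prize at $E[V]$: since the valuation $v_i$ is realized only \emph{after} the latency investment $x_i$ is sunk, the equilibrium strategy cannot condition on $v_i$, so from the ex-ante viewpoint the expected benefit of winning is the same constant $E[V]$ for both players, independent of the chosen $x_i$. I would then verify directly that the proposed profile --- each player drawing $x_i$ from the uniform distribution on $(0,E[V])$ --- is a Nash equilibrium, which is all the statement claims; uniqueness is not asserted and need not be shown.

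First I would fix player $2$'s strategy to be the uniform law on $(0,E[V])$, with CDF $G(x)=x/E[V]$ for $x\in[0,E[V]]$, and compute player $1$'s expected payoff from an arbitrary pure investment $x\ge 0$. Because $G$ is atomless, an exact tie occurs with probability zero, so the random tie-breaking rule is irrelevant and player $1$ wins precisely when $x$ exceeds player $2$'s draw, an event of probability $G(x)$. By independence of $V_1$ from the investments and linearity of expectation, player $1$'s ex-ante payoff is $E[V]\,G(x)-x$. For $x\in[0,E[V]]$ this equals $E[V]\cdot(x/E[V])-x=0$, a constant; for $x>E[V]$ it equals $E[V]-x<0$. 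Hence every point of $[0,E[V]]$ is a best response --- in particular every point in the support $(0,E[V])$ of the prescribed mixed strategy --- and no deviation to $x>E[V]$ is profitable. By symmetry the same holds for player $2$, so the profile is a symmetric mixed-strategy Nash equilibrium, and the common equilibrium payoff is $0$.

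The only point that needs to be stated with care --- more a modelling remark than a genuine technical obstacle --- is precisely this reduction of the random-valuation payoff to the constant prize $E[V]$; once the payoff from investing $x$ against an opponent playing $G$ is pinned down as $E[V]\,G(x)-x$, the verification is immediate. If one wants to indicate how the equilibrium is found rather than merely checked, one may add the standard observations that in any symmetric equilibrium the mixing distribution must be atomless (an atom would invite undercutting), must have no mass at $0$, and must have support $[0,E[V]]$ (bids above $E[V]$ are strictly dominated, and the indifference condition $E[V]G(x)-x=\text{const}$ together with $G(0)=0$ forces $\text{const}=0$ and $G(x)=x/E[V]$); but these remarks are not required for the claim as stated.
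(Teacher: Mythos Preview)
Your proof is correct and follows essentially the same approach as the paper: fix the opponent's uniform law on $(0,E[V])$, compute the payoff $E[V]\cdot G(x)-x$, observe it is identically $0$ on $[0,E[V]]$ and negative above, and conclude every such $x$ is a best response. The paper's argument is terser---it does not spell out the atomlessness/tie-breaking point or the optional derivation remarks---but the substance is the same.
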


\begin{proof}
By construction, the payoff of player $j$ of playing $x_j\leq E[V_j]$ against the uniform strategy on $(0,E[V_i])$ is
$$F(x_j)E[V_j]-x_j=\frac{x_j}{E[V_i]}E[V_j]-x_j=0.$$
Choosing a strategy $x_j>E[V_j]$ gives a negative pay-off. Therefore, each $0\leq x_j\leq E[V_j]$ is the best response of player $j$, and mixing uniformly among them is also the best response. 

\end{proof}

The above-described equilibrium is unique up to a change of strategy on a null set and in any mixed equilibrium, both players obtain the same payoffs as in this equilibrium. Note that the result is independent of the latency cost function. The only property used is that if a player invests more than the other player in the latency technology, its transaction is scheduled earlier. 

\subsubsection{Budget constraints}
We now model the fact that players may not have access to an arbitrary amount of money they need to invest to improve their latency, but are instead are constrained by a budget. Let $B_i$ denote the budget of player $i$, meaning that player $i$ cannot spend more than $B_i$. We consider an asymmetric case where one (weak) player has a budget $B_1<E[V_i]$ and the other (strong) player has a larger budget with $B_2>B_1$. First, note that similar to the previous section with unlimited access to money, there is no pure strategy Nash equilibrium. Therefore, we switch to mixed strategy equilibrium. Let $F_i$ denote the probability distribution of playing different strategies. 

\begin{proposition}
    There exists a mixed Nash equilibrium solution in the game in which the weak player receives a payoff of $0$ and the strong player receives a payoff of $E[V_i]-B_1$.
\end{proposition}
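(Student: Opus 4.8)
The plan is to prove the proposition constructively: I will exhibit a pair of mixed strategies with the stated payoffs and then check that neither player has a profitable deviation, which is the standard route for all‑pay contests with a budget cap that binds on one side (cf.\ the setting of \cite{all_pay_contests}). Write $v:=E[V_i]$ for the common ex‑ante prize. Two simple bounds explain why the target payoffs are exactly these. On the one hand, the weak player can guarantee at least $0$ by investing nothing, so in any equilibrium its payoff is $\ge 0$. On the other hand, the strong player, whose budget $B_2$ strictly exceeds $B_1$, can win outright by investing any amount $y\in(B_1,B_2]$ (the weak player cannot match it); letting $y\downarrow B_1$ this secures a payoff arbitrarily close to $v-B_1$, so in any equilibrium its payoff is $\ge v-B_1$. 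The claim is that both floors are attained with equality simultaneously.

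For the construction I would have the weak player randomize on $[0,B_1]$ with an atom of mass $1-B_1/v$ at $0$ and the remaining mass spread with constant density $1/v$ on $(0,B_1]$ — i.e.\ c.d.f.\ $F_W(x)=1-(B_1-x)/v$, which reaches $1$ at $x=B_1$ with no atom there — and the strong player randomize on $[0,B_1]$ with mass spread at constant density $1/v$ on $(0,B_1)$ (c.d.f.\ $x/v$ there) and the residual mass $1-B_1/v$ placed as an atom at $B_1$; the strong player never uses its budget above $B_1$. These distributions are not guessed out of thin air: imposing that the strong player's expected payoff be the constant $v-B_1$ over $(0,B_1]$ forces $F_W$ (in particular the atom at $0$, of the size above), and imposing that the weak player's expected payoff be the constant $0$ forces the strong player's density $1/v$ together with the leftover atom at $B_1$. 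One checks along the way that these are genuine probability measures; non‑negativity of $1-B_1/v$ is exactly the hypothesis $B_1<v$.

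Next I would verify mutual best response. Against the strong player's strategy, the weak player who invests $x\in(0,B_1)$ wins with probability $x/v$ and so gets $v\cdot(x/v)-x=0$; investing $0$ gives $0$; investing $x=B_1$ wins with probability equal to the strong player's mass strictly below $B_1$, i.e.\ $B_1/v$, hence $v\cdot(B_1/v)-B_1=0$; larger investments are infeasible. So the weak player is indifferent over its whole support and can do no better — payoff $0$. Against the weak player's strategy, the strong player who invests $y\in(0,B_1]$ wins with probability $F_W(y^-)=1-(B_1-y)/v$ (the weak atom at $0$ plus the continuous mass below $y$), hence $v\bigl(1-(B_1-y)/v\bigr)-y=v-B_1$; investing $0$ gives only $\tfrac{v}{2}\,(1-B_1/v)<v-B_1$; investing $y\in(B_1,B_2]$ gives $v-y<v-B_1$. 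So the strong player is indifferent over $(0,B_1]$ and can do no better — payoff $v-B_1$.

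The main obstacle is the atom at $B_1$ and the tie‑breaking attached to it: for the weak player's deviation to $x=B_1$ to return exactly $0$ and not a strictly positive amount, the coincident‑bid event at $B_1$ must be resolved in favour of the strong player. Since the weak player places zero mass on $B_1$ on the equilibrium path, this is only an off‑path convention; equivalently, one can obtain the equilibrium as a limit of the same construction with the strong player's atom moved to $B_1+\varepsilon\le B_2$, where no tie ever occurs and the strong payoff is $v-B_1-\varepsilon\to v-B_1$. I would state this convention explicitly. The only remaining points are routine: that the displayed objects are probability distributions (done above) and that no pure‑strategy profile is an equilibrium, so that passing to mixed strategies is forced — a case check the paper has already recorded.
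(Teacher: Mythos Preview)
Your proposal is correct and follows essentially the same route as the paper: you construct the identical pair of mixed strategies (weak player with an atom of mass $1-B_1/v$ at $0$ and uniform density $1/v$ on $(0,B_1]$; strong player with uniform density $1/v$ on $(0,B_1)$ and the residual atom at $B_1$) and then verify indifference on the supports. Your treatment is in fact more careful than the paper's, since you explicitly flag and resolve the tie-breaking issue at $x=B_1$ created by the strong player's atom—a point the paper sidesteps by only checking the weak player's payoff for $x_1<B_1$.
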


\begin{proof}
 The following strategy profile in which the first player plays according to the following (mixed) strategy: 

$$
F_1(x)=
\begin{cases}
\frac{x}{E[V_i]}+\frac{E[V_i]-B_1}{E[V_i]},  & x\in (0,B_1],\\
\frac{E[V_i]-B_1}{E[V_i]},\quad  & x = 0,\\
1, & x>B_1,
\end{cases}
$$

the second player plays according to

$$
F_2(x)=
\begin{cases}
\frac{x}{E[V_i]}, \quad & x\in [0,B_1),\\
1, & x\geq B_1,
\end{cases}
$$

is a mixed strategy equilibrium. The first, weak player obtains an expected payoff $0$ for any choice of $0\leq x_1< B_1$. The second, strong player obtains an expected payoff of $E[V_i]-B_1$ for any choice $0< x_2\leq B_1$. Choosing $x_2>B_1$ is wasteful for the second player and will not occur in equilibrium. Thus, both players are indifferent between all pure strategies in support of $F_1$ resp. $F_2$ and for player $2$ choosing an action outside of the support of $F_2$ is dominated. The mixed strategies form a Nash equilibrium.   
\end{proof}

 Similarly to the unconstrained case, the above-described equilibrium is unique up to a change of strategy on a null set and in any mixed equilibrium, both players obtain the same payoffs as in this equilibrium. Also similarly to the previous section, the result is independent of the latency cost function. The only property to derive this result is that if a player invests more than the other player in the latency technology, its transaction is faster (has a lower timestamp).

\subsubsection{Ex-ante Latency with Interim Bidding}
We now analyze the model where both latency and bidding are allowed but the latency is ex-ante. That is, investment in latency happens before players learn their valuations but after learning their valuation players can use bidding to improve the transaction score.

We consider a version where players learn the other players' latency investment decisions before bidding. This models the fact that players will typically play the game repeatedly and can therefore observe latency levels of each other.

In the following let $x=(x_1,x_2)$ be the latency investment levels chosen by the two bidders and let $\Delta:=t_2(x_2)-t_1(x_1)$ be the corresponding difference in latency. W.l.o.g. assume $\Delta\geq0$. First, we consider the case that $\Delta=0$:

\begin{proposition}\label{fullseparation}
    There is a completely separating equilibrium of the bidding game when both bidders have made the same ex-ante investment.
\end{proposition}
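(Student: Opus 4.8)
The plan is to reduce the bidding subgame to a standard symmetric two-bidder first-price all-pay auction and then to exhibit its separating equilibrium. Since both bidders have made the same ex-ante latency investment we have $t_1=t_2$, so by~\eqref{score_formula} the ordering satisfies $s_1>s_2$ if and only if $\pi(b_1)>\pi(b_2)$, and because $\pi$ is strictly increasing this is equivalent to $b_1>b_2$. Hence the transaction with the strictly larger bid is sequenced first, ties are broken uniformly at random, both transactions are eventually included so each player pays its own bid $b_i$, and both $\pi$ and the cap $g$ drop out of the reduced game entirely. A bidder with valuation $v$ who submits $b$ against an opponent using a bid function $\beta(\cdot)$ therefore has interim payoff $v\,\Pr[b>\beta(V_2)]-b$, which is precisely the all-pay auction payoff.

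Next I would posit a symmetric equilibrium given by a strictly increasing, differentiable $\beta:[0,1]\to\mathbb{R}^+$ with $\beta(0)=0$, and let $f$ denote the density of $F$. Against an opponent using $\beta$, a type $v$ who submits $\beta(z)$ obtains $U(z;v)=vF(z)-\beta(z)$; the first-order condition $vf(z)=\beta'(z)$ at the truthful report $z=v$ gives $\beta'(v)=vf(v)$, and integrating with $\beta(0)=0$ yields
\[
\beta(v)=\int_0^v u\,f(u)\,du,
\]
which specializes to $\beta(v)=v^2/2$ for the uniform distribution.

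It then remains to check that this candidate really is an equilibrium, the only step that needs any care. Substituting $\beta'(z)=zf(z)$ back in gives $\partial_z U(z;v)=vf(z)-\beta'(z)=(v-z)f(z)$, which is positive for $z<v$ and negative for $z>v$; hence $z=v$ is a global maximum over $[0,1]$. Bids above $\beta(1)$ win with probability one and so are strictly worse than $\beta(1)$, and negative bids are infeasible, so no deviation is profitable and $\beta$ is a symmetric equilibrium. Finally $\beta'(v)=vf(v)>0$ for all $v>0$, so $\beta$ is strictly increasing and the equilibrium is completely separating. Note that, unlike the general argument that later invokes concavity of $\pi$ to avoid pooling, nothing here uses $\pi''<0$: once $\Delta=0$ the priority function cancels from the incentives. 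The main obstacle is therefore bookkeeping rather than ideas --- making the reduction to the all-pay auction precise and ruling out out-of-range deviations --- after which the classical all-pay computation applies verbatim.
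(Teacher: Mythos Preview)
Your proof is correct and arrives at the same equilibrium as the paper, but by a different parametrization. The paper works throughout in \emph{signal space}: it treats the choice variable as the priority $\pi\in[0,g)$, with associated cost $b=\pi/(g-\pi)$, sets up the pair of first-order conditions~\eqref{first_ode}--\eqref{second_ode} in terms of the inverse functions $\tilde v_i(\pi,\Delta)$, and specializes to $\Delta=0$ to obtain $\tilde v(\pi)=\sqrt{2\pi/(g-\pi)}$, i.e.\ $\pi(v)=gv^2/(2+v^2)$. You instead observe that for $\Delta=0$ the map $b\mapsto\pi(b)$ is an order-preserving relabeling, so the game in \emph{bid space} is literally the symmetric all-pay auction, and you recover $\beta(v)=v^2/2$ directly; since $b=\pi/(g-\pi)$, this is exactly the paper's $\pi(v)=gv^2/(2+v^2)$.

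What each buys: your route is shorter and more elementary for this proposition, avoids the change of variables, and you additionally supply a global-optimality check (the sign analysis of $\partial_z U(z;v)=(v-z)f(z)$ and the argument ruling out bids above $\beta(1)$), which the paper's first-order computation does not make explicit. The paper's signal-space formulation, on the other hand, is chosen so that the very same system of ODEs carries over unchanged to the asymmetric case $\Delta\neq0$ in Proposition~\ref{partialseparation}; in bid space that case would not reduce to a standard all-pay auction because the winning condition becomes $\pi(b_1)-\pi(b_2)\ge\Delta$ rather than $b_1\ge b_2$. Your remark that concavity of $\pi$ is unused when $\Delta=0$ is also correct: only strict monotonicity of $\pi$ enters your reduction.
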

\begin{proof}
Given in Section~\ref{subsec:exante-proofs}
\end{proof}

Next, we consider the case that $\Delta\neq 0$.
For the case of different ex-ante investment we get partially separating equilibria where bidders do not bid for low valuations and bid for high valuations. The bidding strategies are asymmetric in general. However, for sufficiently large $g$ the equilibrium becomes approximately symmetric and approximately efficient. See Figure~\ref{bidding} for a graphical illustration.
\begin{figure}
\centering
\includegraphics[width=0.4\textwidth]{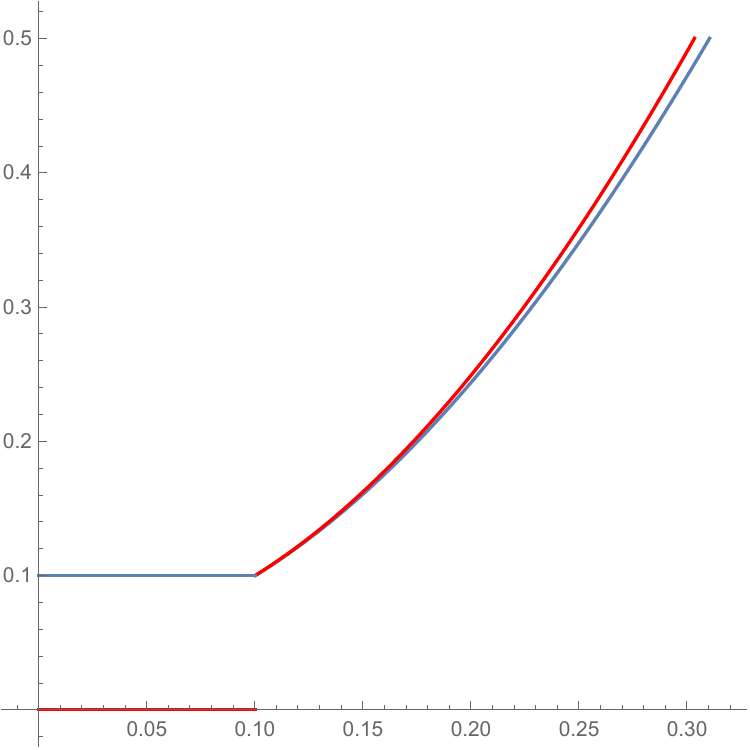}
\caption{Example of equilibrium signaling functions for $g=10$ and $\Delta=0.1$. Timestamps are normalized so that $t_2=0$. The blue function is the equilibrium signal $\pi_1(v)-t_1$ for bidder $1$ as a function of the valuation. The red function is the equilibrium signal $\pi_2(v)-t_2$ for bidder $2$ as a function of the valuation.}
\label{bidding}
\end{figure}
\begin{proposition}\label{partialseparation}
    There is an equilibrium of the bidding game which is separating conditional on bidding: There is a threshold $\sqrt{\frac{\Delta}{g-\Delta}}$, such that a bidder does not bid if his valuation is below the threshold and bids if his valuation is above the threshold. Conditional on bidding, the high latency bidder $i$ produces a higher signal than the low latency bidder $j$ for equal valuations: $\pi_i(v)-t_i>\pi_j(v)-t_j,$ for $v>\sqrt{\frac{\Delta}{g-\Delta}}.$
\end{proposition}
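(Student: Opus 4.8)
The plan is to rephrase the interim bidding game as an independent–private–values all-pay auction and then construct the claimed equilibrium explicitly. Since $\pi$ is strictly increasing from $[0,g)$ onto $[0,g)$ with inverse $\beta(y):=\pi^{-1}(y)=y/(g-y)$ (taking $c=1$), a player who wants his bid to contribute $y$ to the score must post $\beta(y)$, and $\beta$ is smooth, strictly increasing and strictly convex with $\beta(0)=0$ and $\beta(y)\to\infty$ as $y\to g$. Normalising timestamps so that the low-latency player $j$ has $t_j=0$ and the high-latency player $i$ has $t_i=\Delta>0$, each player $k\in\{i,j\}$ picks a signal level $y_k\in[0,g)$, pays $\beta(y_k)$, and player $i$ is served first iff $y_i>y_j+\Delta$ while $j$ is served first iff $y_j\ge y_i-\Delta$; the winner collects his value $v_k\sim F=U[0,1]$. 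I would look for a Bayesian equilibrium in which each $y_k=\phi_k(v)$ is nondecreasing.

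First I would establish the shape of the equilibrium. Two simple facts pin it down: (i) for player $i$ every $y_i\in(0,\Delta]$ is weakly dominated by $y_i=0$ — both are sure losses against every type of $j$, but $0$ is cheaper — so $i$ either bids $0$ or produces a signal strictly above $\Delta$; (ii) incurring positive cost is unprofitable for small valuations, so there is a threshold $\tau$ below which both players bid $0$ and above which $\phi_i,\phi_j$ are strictly increasing, with $\phi_j$ continuous at $\tau$ ($\phi_j(\tau)=0$) and $\phi_i$ jumping at $\tau$ to $\phi_i(\tau^+)=\Delta$, the least signal with which $i$ can beat a positive-bidding type of $j$. Monotonicity and the absence of pooling follow from strict convexity of $\beta$, which, as noted in the properties of $\pi$, is precisely what delivers interior first-order conditions and prevents bunching; these are the routine ``folk'' ingredients of asymmetric all-pay analysis.

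Next I would pin down $\tau$ from player $i$'s indifference at the threshold. Bidding $0$ yields payoff $0$; jumping to $y_i=\Delta+\varepsilon$ lets $i$ beat exactly the mass $F(\tau)=\tau$ of $j$'s zero-bidders (plus a $j$-set of vanishing measure) at cost $\beta(\Delta)+o(1)$, so $i$'s payoff at valuation $\tau$ tends to $\tau^2-\beta(\Delta)$. Setting this to zero gives $\tau^2=\beta(\Delta)=\Delta/(g-\Delta)$, i.e.\ $\tau=\sqrt{\Delta/(g-\Delta)}$, and one checks that the same $\tau$ is the point below which $j$ optimally bids $0$ (this uses $\phi_i'(\tau^+)=g\tau$, which falls out of $j$'s first-order condition at $\tau$). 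On $[\tau,1]$ I would write the two first-order conditions — conveniently expressed through the ``tying map'' $v\mapsto\phi_j^{-1}(\phi_i(v)-\Delta)$ sending a type of $i$ to the type of $j$ he ties with — obtain a coupled ODE system with the boundary data above, and invoke Picard–Lindelöf together with a shooting/continuation argument for a solution; finally I would verify global (not merely local) optimality of the resulting profile against deviations to $0$, to an interior signal, and to a signal above the support.

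It remains to prove the over-correction inequality. Put $D(v):=\bigl(\pi_i(v)-t_i\bigr)-\bigl(\pi_j(v)-t_j\bigr)=\phi_i(v)-\Delta-\phi_j(v)$, so that $D(\tau^+)=0$. The key point is that $D$ cannot return to $0$ from above on $(\tau,1]$: at any $v_1>\tau$ with $D(v_1)=0$ the two players tie at the common type, and differentiating the first-order conditions there gives $\phi_i'(v_1)=v_1/\beta'(\phi_j(v_1))$ and $\phi_j'(v_1)=v_1/\beta'(\phi_j(v_1)+\Delta)$; strict convexity of $\beta$ forces $\beta'(\phi_j(v_1)+\Delta)>\beta'(\phi_j(v_1))$ and hence $D'(v_1)=\phi_i'(v_1)-\phi_j'(v_1)>0$, contradicting a downward crossing. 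The same computation at $v=\tau$ yields $D'(\tau^+)>0$, so $D(v)>0$ for every $v>\tau$, which is the assertion; letting $g\to\infty$ with $\Delta$ fixed (or $\Delta\to0$) sends $\tau\to0$ and $D\to0$, recovering the approximate symmetry and efficiency mentioned before the statement. I expect the genuine work to lie in the second and fourth steps — rigorously justifying the threshold-plus-jump structure (monotonicity, no pooling, the single discontinuity of $\phi_i$) and running the ``$D$ cannot re-cross zero'' comparison — whereas the threshold value $\sqrt{\Delta/(g-\Delta)}$ itself drops out of the one-line indifference computation once that structure is in place.
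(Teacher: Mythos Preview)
Your approach is correct but takes a genuinely different route from the paper's. The paper works with the \emph{inverse} bid functions $\tilde v_1(\pi),\tilde v_2(\pi)$ and exploits an algebraic trick: summing the two first-order conditions yields a closed ODE for the product $\tilde v_1(\pi)\,\tilde v_2(\pi+\Delta)$, which integrates to $\frac{\pi}{g-\pi}+\frac{\pi+\Delta}{g-\pi-\Delta}$. The threshold then falls out of the boundary condition $\tilde v_1(0)\,\tilde v_2(\Delta)=\Delta/(g-\Delta)$ combined with the support-matching requirement $\tilde v_1(0)=\tilde v_2(\Delta)$, and the signal ordering is read off directly by comparing the integrands in the resulting explicit integral formulas for $\ln \tilde v_1$ and $\ln \tilde v_2$.

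Your route instead (i) obtains the threshold from the one-line indifference computation for the high-latency bidder at the jump, and (ii) proves the ordering by the ODE comparison argument that $D(v)=\phi_i(v)-\Delta-\phi_j(v)$ cannot re-cross zero, using only strict convexity of $\beta$. Both arguments are valid. The paper's route is more constructive --- it delivers explicit integral formulas for the equilibrium strategies, which feed the numerical illustration and the efficiency discussion following the proposition. Your route is more robust: the no-recrossing argument works for any strictly convex cost $\beta$ and any atomless $F$, not just the uniform case with $\pi(b)=gb/(b+1)$, so it yields the qualitative conclusion in greater generality at the price of not producing closed-form expressions.
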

\begin{proof}
Given in Section~\ref{subsec:exante-proofs}
\end{proof}

The equilibrium analysis in Propositions~\ref{fullseparation} and~\ref{partialseparation} indicates how efficient our transaction ordering policy is as a function of the latency investment of bidders.
If bidders have the same latency we have a standard all pay auction which yields a fully efficient outcome.
If there is a difference in latency we have
no bidding for low valuation bidders and approximately 
equal signals produced for equal valuations for high valuation bidders. Conditional on entry, low latency bidders underbid and high latency
bidders overbid relative to the standard all pay strategies.
Efficiency depends on the latency difference and the $g$ parameter.
If $g$ is chosen sufficiently large the auction is approximately efficient. A too low $g$ can be detected by low bidding activity. Hence our transaction policy can strike a balance between fairness, low latency and
efficiency if properly parameterized.

\subsubsection{Proofs}
\label{subsec:exante-proofs}

\begin{proof}[Proof of Proposition~\ref{fullseparation}]

We want to determine bidding signals $\pi_1(v_1,\Delta)$ and $\pi_2(v_2,\Delta)$, which are functions of valuations and the difference in latency. For a given $\Delta$ denote the inverse of $\pi_1(\cdot,\Delta)$ and $\pi_2(\cdot,\Delta)$ by $\tilde{v}_1(\cdot,\Delta)$ and $\tilde{v}_2(\cdot,\Delta)$.
 Then bidder $1$ solves at the interim stage 
$$\max_{\pi\geq0}Pr[\pi-t_1(x_1)\geq \pi_2(v_2,x)-t_2(x_2)]v_1-\frac{\pi}{g-\pi}=F(\tilde{v}_2(\pi+\Delta,\Delta))v_1-\frac{\pi}{g-\pi},$$
We obtain the first order condition:
$$f(\tilde{v}_2(\pi+\Delta,\Delta))v_1\frac{\partial \tilde{v}_2(\pi+\Delta,\Delta)}{\partial \pi}=\frac{g}{(g-\pi)^2}$$

For the uniform distribution, this simplifies to:
$$v_1\frac{\partial \tilde{v}_2(\pi+\Delta,\Delta)}{\partial \pi}=\frac{g}{(g-\pi)^2}.$$
Similarly, for bidder $2$ we obtain 
$$v_2\frac{\partial \tilde{v}_1(\pi-\Delta,\Delta)}{\partial \pi}=\frac{g}{(g-\pi)^2}.$$
The two equations give a system of differential equations that need to be solved for $\pi_1$ and $\pi_2$ or alternatively for $\tilde{v}_1$ and $\tilde{v}_2$.
Alternatively, we can write the system as:
\begin{equation}\label{first_ode}
\hfill\tilde{v}_1(\pi,\Delta)\frac{\partial \tilde{v}_2(\pi+\Delta,\Delta)}{\partial \pi}=\frac{g}{(g-\pi)^2}.\hfill
\end{equation}
\begin{equation}\label{second_ode}
\hfill \tilde{v}_2(\pi,\Delta)\frac{\partial \tilde{v}_1(\pi-\Delta,\Delta)}{\partial \pi}=\frac{g}{(g-\pi)^2}. \hfill 
\end{equation}

The solution to~\eqref{first_ode} and~\eqref{second_ode} in case of equal investment (so that $\Delta=0$) and a symmetric equilibrium is given by the following formula:
\begin{equation}\label{equal_lat_solution}
\hfill    \tilde{v}_1(\pi,0) =  \tilde{v}_2(\pi,0) =\sqrt{2\int_0^{\pi}\frac{g}{(g-\pi)^2}d\pi}=\sqrt{\frac{2\pi}{g-\pi}}. \hfill
\end{equation}

We solve for the signal as a function of the valuation: 
$$
v^2=\frac{2\pi}{g - \pi} \Leftrightarrow \pi = \frac{gv^2}{2+v^2}.  
$$
\end{proof}

\begin{proof}[Proof of Proposition~\ref{partialseparation}]
When $x_1\neq x_2$, we can first sum up~\eqref{first_ode} and~\eqref{second_ode} to obtain a differential equation for the expected payoff $v_1v_2$:
\begin{equation}
\hfill \frac{d (v_1(\pi) v_2(\pi+\Delta))}{d\pi} = \frac{g}{(g-\pi)^2} + \frac{g}{(g-\pi-\Delta)^2}. \hfill
\end{equation}

Integrating both sides of the differential equation above gives the solution: 
\begin{equation}\label{K_equation}
 \hfill   v_1(\pi) v_2(\pi+\Delta)= \frac{\pi}{(g-\pi)}+\frac{\pi+\Delta}{g-\pi-\Delta}+K. \hfill
\end{equation}
To determine the constant we need to determine boundary conditions. For bidder $1$, at the threshold where he is indifferent between bidding and not bidding, we have $\pi_1=0$ and for bidder $2$, at the threshold where he is indifferent between bidding and not bidding, he needs to overcome the handicap, we have $\pi_2=\Delta$. At the threshold bidder $2$ should make the same profit as from pooling, $$v_1(0)v_2(\Delta)=\frac{\Delta}{g-\Delta}\Rightarrow K=0.$$

Combining~\eqref{K_equation} and~\eqref{second_ode} we obtain a separable differential equation: 
\begin{equation}\label{final_generic}
\hfill \frac{d v_1(\pi,\Delta)}{v_1(\pi,\Delta)} = d\pi\frac{g}{(g-\pi-\Delta)^2} \left(\frac{\pi}{g-\pi}+\frac{\pi+\Delta}{g-\pi-\Delta}\right)^{-1}. \hfill
\end{equation}
Combining~\eqref{K_equation} and~\eqref{first_ode} we obtain another separable differential equation: 
\begin{equation}\label{final_genericII}
\hfill \frac{d v_2(\pi+\Delta,\Delta)}{v_2(\pi+\Delta,\Delta)} = d\pi\frac{g}{(g-\pi)^2} \left(\frac{\pi}{g-\pi}+\frac{\pi+\Delta}{g-\pi-\Delta}\right)^{-1}. \hfill
\end{equation}

Integrating both parts of the equation~\eqref{final_generic} solves the (logarithm of) the value as a function of the bid:
$$\ln(v_1(\pi))-\ln(v_1(0))=\int_{0}^{\pi}\frac{g}{(g-\pi-\Delta)^2} \left(\frac{\pi}{g-\pi}+\frac{\pi+\Delta}{g-\pi-\Delta}\right)^{-1}d\pi.$$

Similarly, integrating both parts of the equation~\eqref{final_genericII} solves the (logarithm of) the value as a function of the bid:
$$\ln(v_2(\pi+\Delta))-\ln(v_2(\Delta))=\int_{0}^{\pi}\frac{g}{(g-\pi)^2} \left(\frac{\pi}{g-\pi}+\frac{\pi+\Delta}{g-\pi-\Delta}\right)^{-1}d\pi.$$

To determine the marginal valuations $v_1(0)$ and $v_2(\Delta)$ at which the two bidders start bidding, note that the support of $\pi_i-t_i$ and that of $\pi_j-t_j$ need to coincide for valuations where we have separation of types. Therefore, $v_1(0)=v_2(\Delta)$. Since $v_1(0)v_2(\Delta)=\frac{\Delta}{g-\Delta}$ it follows that $v_1(0)=v_2(\Delta)=\sqrt{\frac{\Delta}{g-\Delta}}$. This is the threshold where bidders start bidding. It follows that for $\Delta\neq0$

$$v_1(\pi)=\sqrt{\frac{\Delta}{g-\Delta}}\exp\left(\int_{0}^{\pi}\frac{g}{(g-\pi-\Delta)^2} \left(\frac{\pi}{g-\pi}+\frac{\pi+\Delta}{g-\pi-\Delta}\right)^{-1}d\pi\right)$$
and
$$v_2(\pi)=\sqrt{\frac{\Delta}{g-\Delta}}\exp\left(\int_{0}^{\pi-\Delta}\frac{g}{(g-\pi)^2} \left(\frac{\pi}{g-\pi}+\frac{\pi+\Delta}{g-\pi-\Delta}\right)^{-1}d\pi\right).$$

To compare the equilibrium signals $\pi_1(v)-t_1$ and $\pi_2(v)-t_2$ for $v>\sqrt{\frac{\Delta}{g-\Delta}}$, we need to compare $\pi_1(v)+\Delta$ to $\pi_2(v)$. 

From the expressions for the valuations as a function of the bid, we can observe (observe that $\frac{g}{(g-\pi-\Delta)^2}\geq \frac{g}{(g-\Delta)^2}$) that 
$$v_1(\pi)>v_2(\pi+\Delta),$$
for $\pi>0$. 
It follows that
$$\pi_1(v)\leq \pi_2(v)-\Delta,$$
for $v\geq\sqrt{\frac{\Delta}{g-\Delta}}$.
\end{proof}

\subsection{Ex-Post Latency with Bidding}

\label{subsec:expost}
We now analyze the ex-post model with bidding; here both the latency investment, and the bid can be made after the valuation is observed. First, we start with only the latency investment decision. 
The expected utility of player $i$ is equal to: 

$$Pr[t(v_i)<t(v_j)]v_i-c(t(v_i)),$$ 

where $j\in\{1,2\}\setminus {i}$.  

We can look at this from a dual perspective: by $v(t)$ we define the inverse of $t(v)$. This is the so-called {\it Revelation Principle}. Instead of some function of the type, we report our type directly. Then, the optimization problem becomes: 
\begin{equation}\label{optimization_2players}
 \hfill    \max_{v} Pr[v\geq v_2]v_1-c(t(v)). \hfill
\end{equation}

By replacing the probability with $F(v)$, we get that it is equivalent to 
\begin{equation*}\label{optimization_2players_simplified}
 \hfill    \max_{v} F(v)v_1-c(t(v)). \hfill
\end{equation*}

By the first order condition, we get:
\begin{equation*}
\hfill v_1f(v)-c'(t(v))t'(v)|_{v=v_1} = 0, \hfill
\end{equation*}

where $f$ is a density function of the valuation distribution $F$. By plugging in $v=v_1$, it is equal to:
\begin{equation}\label{opt_2_simplified}
\hfill v_1f(v_1)-c'(t(v_1))t'(v_1). \hfill
\end{equation} 

For the uniform distribution and cost function $c=\frac{1}{t}$, first order condition  gives the following differential equation: 
\begin{equation}\label{diff_eq_2players_uniform}
\hfill v_1+\frac{t'(v_1)}{t^2(v_1)}=0. \hfill
\end{equation}

Solving this equation gives $t(v)=\frac{2}{c_1+v^2}.$ By the boundary condition that $0$ valuation type should wait infinitely (or equivalently pay $0$ in the latency), we obtain the value of the constant in the solution: $c_1=0$. Therefore, cost incurred is equal to $\frac{1}{t}=\frac{v^2}{2}$. On average each player pays:
\begin{equation*}
\hfill \int_{0}^{1}\frac{v^2}{2}f(v)dv|_{0}^{1} = \frac{1}{6}, \hfill
\end{equation*}

for better latency, before learning their types.
The cost of producing score $s=\frac{gm}{m+1}-t$ is: 
\begin{equation}
\hfill c(s):=m+\frac{1}{t}. \hfill
\end{equation}

We decompose total expenditure into $2$ parts, for bidding and for time, by representing $m$ and $c(t(v))$ as functions of $v$ and taking integrals:

$$b(g):=\int_{0}^{1}m(v)f(v)dv \text{ and } \int_{0}^{1}\frac{1}{t(v)}f(v)dv.$$

\begin{proposition}\label{ex_post_prop}
The limit of $b(g)$ when $g$ tends to infinity is equal to $\frac{1}{6}$. 
$b(g)$ is an increasing function in $g$, for $g$ large enough. 
\end{proposition}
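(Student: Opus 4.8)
The plan is to identify the symmetric separating equilibrium of the combined game (interim latency and bidding, with $c=1$ and $F$ uniform), read off the equilibrium bid $m(v)$ as a function of the valuation, integrate it to obtain $b(g)$, and then analyze the resulting closed form. The first step is to pin down total expenditure: in a symmetric separating equilibrium the interim winning probability of a type-$v$ player is $F(v)=v$, so by the usual envelope argument the interim utility is $U(v)=\int_0^v F(x)\,dx=\frac{v^2}{2}$ with $U(0)=0$; writing $\sigma(v)$ for the score produced by type $v$ and $C(\cdot)$ for the minimum cost of producing a given score, this gives $C(\sigma(v))=v\,F(v)-U(v)=\frac{v^2}{2}$, independently of $g$. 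This is exactly the revenue-equivalence statement that total expenditure averages $\frac16$, just as in the latency-only subsection.

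The second step is to compute the cost-minimizing split of a given score between the two instruments. Producing a target score $s$ at minimal cost $m+\frac1t$ subject to $\frac{gm}{m+1}-t=s$ equates the marginal cost per unit of score of the two instruments, $\frac{1}{\pi'(m)}=\frac{(m+1)^2}{g}$ for bidding versus $\frac{1}{t^2}$ for latency, i.e. $t(m+1)=\sqrt g$ in the interior. A regime change occurs at the threshold $v^\ast(g)=\sqrt2\,g^{-1/4}$: for $v\le v^\ast$ the interior solution would require $m<0$, so such players set $m(v)=0$ and use latency only, recovering $t(v)=2/v^2$ from the cost identity of the first step; for $v>v^\ast$, combining $t(m+1)=\sqrt g$ with $m+\frac1t=\frac{v^2}{2}$ gives $m(v)=\frac{\sqrt g\,v^2-2}{2(\sqrt g+1)}$ and $t(v)=\frac{2(\sqrt g+1)}{v^2+2}$. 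One then checks that the induced $\sigma(v)$ is continuous at $v^\ast$ (both pieces give $t=\sqrt g$, $m=0$ there) and strictly increasing on $[0,1]$, which confirms the equilibrium is separating and efficient and so legitimizes the first step.

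The third step is to integrate and analyze. Only the range $v>v^\ast$ contributes, so $b(g)=\int_{v^\ast(g)}^1\frac{\sqrt g\,v^2-2}{2(\sqrt g+1)}\,dv$, which evaluates in closed form to $b(g)=\frac{\sqrt g-6+4\sqrt2\,g^{-1/4}}{6(\sqrt g+1)}=\frac16-\frac{7-4\sqrt2\,g^{-1/4}}{6(\sqrt g+1)}$. Sending $g\to\infty$ (or invoking dominated convergence, since $m(v)\to\frac{v^2}{2}$ pointwise and $v^\ast(g)\to0$) yields $b(g)\to\int_0^1\frac{v^2}{2}\,dv=\frac16$. For monotonicity, write $b=\frac16-h(g)$ with $h(g)=\frac{7-4\sqrt2\,g^{-1/4}}{6(\sqrt g+1)}>0$ for $g$ past a small bound; a direct quotient-rule computation shows the sign of $h'(g)$ is that of $18\sqrt2\,g^{-3/4}+6\sqrt2\,g^{-5/4}-21\,g^{-1/2}$, whose dominant term for large $g$ is $-21\,g^{-1/2}$, so $h$ is eventually decreasing and hence $b$ eventually increasing.

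The main obstacle is the second step: correctly pinning down the equilibrium, in particular the threshold $v^\ast$ below which bidding shuts off, and verifying that the resulting score function is monotone — this monotonicity (equivalently, efficiency of the allocation) is precisely what makes the revenue-equivalence shortcut in the first step valid. Once the closed form for $b(g)$ is in hand, the limit is immediate and the monotonicity claim is a routine one-variable calculus check, the only subtlety being to track which negative power of $g$ dominates the derivative.
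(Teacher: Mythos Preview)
Your argument is correct and lands on the same closed form for $m(v)$ and $b(g)$ as the paper, via the same cost-minimization / threshold / integration skeleton. The organization differs in one useful way: the paper first computes the minimal-cost function $c(s)$ of producing a given score, then solves the first-order ODE $v f(v)=c'(s)s'(v)$ to obtain the equilibrium score $s(v)=\frac{gv^2-4\sqrt g-2}{v^2+2}$, and only afterwards checks that this yields total expenditure $c(v)=v^2/2$; you instead invoke revenue equivalence at the outset to get $C(\sigma(v))=v^2/2$ directly and then solve the two algebraic relations $t(m+1)=\sqrt g$ and $m+1/t=v^2/2$ for $m(v)$, bypassing the ODE entirely. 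Both routes give $m(v)=\frac{\sqrt g\,v^2-2}{2(\sqrt g+1)}$ above the common threshold $v^\ast=\sqrt{2}\,g^{-1/4}$ and hence the same $b(g)$; your route is shorter, and you also supply the quotient-rule computation establishing that $b(g)$ is eventually increasing, a claim the paper states but whose proof in the appendix stops at the limit $\tfrac16$.
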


\begin{proof}
    Given in Section~\ref{subsec:expost-proofs}
\end{proof}

The proposition implies that by taking large enough $g$, the system extracts almost all value invested in the latency through bidding.
Starting from some threshold value on $g$, extraction increases with increasing $g$.

We can verify whether the constructed equilibrium is unique by checking the conditions given in~\cite{maskin2003uniqueness}.

\begin{example}
We can calculate a few values of $b(g)$. In particular, $b(1000)\approx 0.1294$, meaning a player pays approximately $77\%$ of the total expenditure in bids, and $b(10000)\approx 0.1537$, meaning a player pays approximately $92\%$ of the total expenditure in bids. 
\end{example}

Note that in the proof of the proposition~\ref{ex_post_prop}, the total investment in both latency and bidding, $c(v)$, is the same value $\frac{v^2}{2}$, as in the case of only investing in the latency. We show that this is not a coincidence. 
In general, assume that there is an arbitrary signaling technology described by an increasing, differentiable cost function $C(s)$.
The following result shows the revenue equivalence of ex-post bidding:

\begin{proposition}\label{rev_equiv_2}
Both players spend the same amount on average for any cost function $C$. 
\end{proposition}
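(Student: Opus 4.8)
The plan is to run the standard revenue-equivalence argument from single-object auction theory, adapted to the fact that here the "score" is the allocation-determining statistic and the total expenditure is $C(s)$ plus whatever latency cost is bundled in. First I would set up the interim problem exactly as in~\eqref{optimization_2players}: in a symmetric equilibrium each player with valuation $v$ effectively picks a target valuation $w$ to mimic (by the Revelation Principle already invoked in the text), winning with probability $F(w)$ against the opponent who plays truthfully. Writing the equilibrium score produced by type $v$ as $s(v)$ and the total expenditure of type $v$ as $e(v)$ (here $e(v)=C(s(v))+\tfrac{1}{t(v)}$ if both technologies are used, or $e(v)=C(s(v))$ if only the signaling technology is), the mimicking payoff is $U(v,w)=F(w)\,v - e(w)$.

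The key step is the envelope/first-order argument. Incentive compatibility at $w=v$ gives $F'(w)v - e'(w)=0$ at $w=v$, i.e. $e'(v)=v f(v)$, for every $v$ in the range where the player is active. Since this ODE for $e$ does not involve $C$ at all, integrating it from the lowest active type $\underline v$ gives $e(v) = e(\underline v) + \int_{\underline v}^{v} u f(u)\,du$. The lowest active type's expenditure is pinned down by a boundary condition — for the lowest participating type the score (or latency investment) is the degenerate one, so $e(\underline v)=0$ (this is exactly the "$0$ valuation type pays $0$" / "waits infinitely" boundary condition used earlier, and when $\Delta=0$ and $g$ finite we have $\underline v = 0$). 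Hence $e(v)=\int_0^v u f(u)\,du$ is independent of $C$, and the ex-ante expected expenditure $\int_0^1 e(v) f(v)\,dv$ is the same for every cost function $C$; with the uniform distribution this equals $\int_0^1 \tfrac{v^2}{2} f(v)\,dv = \tfrac16$, matching the latency-only value of Proposition~\ref{ex_post_prop}. For the symmetric two-player case this is the whole statement.

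One genuine subtlety I would need to handle carefully is the boundary condition / participation set: for generic signaling technologies there could be a nondegenerate interval $[0,\underline v)$ of types who produce the null score and incur zero expenditure, just as in Proposition~\ref{partialseparation}; but for the symmetric instance at hand ($\Delta=0$, $g<\infty$) the analysis of Proposition~\ref{fullseparation} shows full separation with $\underline v=0$, so $e(\underline v)=0$ and no mass of types is parked at zero expenditure with positive win probability. The other point to make rigorous is that $C$ being increasing and differentiable guarantees $s(v)$ is well-defined and the map between "target score" and "target mimicked valuation" is monotone, so the first-order condition is also sufficient and the envelope formula is valid; I expect this monotonicity/second-order check to be the main obstacle, but it is the routine single-crossing verification and, given the concavity/convexity assumptions already imposed on $\pi$ (equivalently convexity of the signaling cost), it goes through. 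The conclusion is that only the value distribution $F$, not the cost function $C$, determines average expenditure — the revenue-equivalence principle in this setting.
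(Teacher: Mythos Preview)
Your proposal is correct and follows essentially the same approach as the paper: apply the Revelation Principle, take the first-order condition at truth-telling to get $e'(v)=v f(v)$ (the paper writes this as $\tilde{C}'(v_1)=v_1 f(v_1)$ with $\tilde{C}(v):=C(s(v))$), integrate from the zero-type boundary condition, and observe that the result depends only on $F$. Your write-up is in fact more careful than the paper's---you flag the monotonicity/second-order check and the possibility of a non-trivial lowest active type, both of which the paper's proof silently assumes away---but the core argument is identical.
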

\begin{proof}
    Given in Section~\ref{subsec:expost-proofs}
\end{proof}

The amount spent depends only on the value belief distribution function.

\subsubsection{Proofs}
\label{subsec:expost-proofs}

\begin{proof}[Proof of Proposition~\ref{ex_post_prop}]

The optimization problem of the player in the equilibrium is to minimize cost, subject to the score equation constraint. 
By plugging in $t=\frac{gm}{m+1}-s$, we obtain the minimization problem:
\begin{equation*}
\hfill \min_{m}\left(m+\frac{m+1}{gm-s(m+1)}=:x(m)\right). \hfill
\end{equation*} 

The first order condition on $x(m)$ gives: 
\begin{equation}\label{FOC_2players}
\hfill \frac{dx(m)}{dm} = 1+\frac{gm-s(m+1)-(m+1)(g-s)}{(gm-s(m+1))^2}=1-\frac{g}{(gm-s(m+1))^2}=0, \hfill
\end{equation}
gives that the value of $m$ that minimizes the cost function. The solutions of the last equation are $gm-sm-s=\sqrt{g}$ equivalent to $m=\frac{s+\sqrt{g}}{g-s}$ and $gm-sm-s=-\sqrt{g}$ equivalent to $m=\frac{s-\sqrt{g}}{g-s}$, or the boundary condition $m=0$. For $m=0,$ the value $x(0)=-\frac{1}{s}$, while for $m=\frac{s+\sqrt{g}}{g-s}$, the value 
\begin{equation*}
\hfill x\left(\frac{s+\sqrt{g}}{g-s}\right)=\frac{s+\sqrt{g}}{g-s} + \frac{\frac{s+\sqrt{g}}{g-s} + 1}{g\frac{s+\sqrt{g}}{g-s}-s(\frac{s+\sqrt{g}}{g-s}+1)}=\frac{1 + 2 \sqrt{g} + s}{g - s}. \hfill
\end{equation*}

Accordingly, the marginal cost of producing signal $s$ is:

$$c'(s)=\begin{cases}
\frac{(1 + \sqrt{g})^2}{(g - s)^2},\quad \text{ if } s>-\sqrt{g},\\
\frac{1}{s^2},\quad \text{ if } s\leq-\sqrt{g}.
\end{cases}$$

We solve a similar differential equation as~\eqref{opt_2_simplified}, just with different marginal cost function $c'$, and instead of time function $t$, we have a score function $s$ of valuation $v$. The differential equation becomes: 
\begin{equation}\label{differential_score}
\hfill vf(v)-c'(s)s'(v)=0. \hfill
\end{equation}

We need to solve for the $s(v)$ function. For types $v$ with $\frac{2}{v^2}\geq\sqrt{g}$ who only use latency
we have the same solution as before $$s(v)=-\frac{2}{v^2}.$$

The marginal type who is indifferent between using only latency and using a combination of the two technologies is given by
$$u=\sqrt{\frac{2}{\sqrt{g}}}.$$
which gives the boundary condition
$s(u)=-\sqrt{g}$ for the differential equation describing the behavior of types who choose a signal $s\geq-\sqrt{g}$:

$$v = \frac{(1 + \sqrt{g})^2}{(g - s)^2}s'(v).$$ 
We obtain the solution
\begin{equation}\label{solution_raw}
\hfill s(v) =  (4 c_1 g^{3/2} + 2 c_1 g^2 + 2 c_1 g + g (v^2 - 2) - 4 \sqrt{g} - 2)/(2 c_1 g + 4 c_1 \sqrt{g} + 2 c_1 + v^2). \hfill
\end{equation}

The value of the constant $c$ is obtained from the boundary condition that a zero-value player does not invest and it is equal to 

$$c_1=\frac{1}{(1 + \sqrt{g})^2}.$$

Therefore, plugging in the constant value in the solution~\eqref{solution_raw} and simplifying it gives: 
$$s(v)=\frac{g v^2 - 4 \sqrt{g} - 2}{v^2+2}.$$

Plugging this into the formula of $c(s)$, gives the cost value as a function of valuation $v$:

$$c(v)=\frac{1+2\sqrt{g}+\frac{g v^2 - 4 \sqrt{g} - 2}{v^2+2}}{g-\frac{g v^2 - 4 \sqrt{g} - 2}{v^2+2}}=\frac{v^2}{2}.$$

Separate expenditure in the bidding is calculated by the following formula: 
\begin{align*}
b(g)=&\int_{u}^{1}m(v)f(v)dv = \int_{\sqrt{\frac{2}{\sqrt{g}}}}^1\frac{\frac{gv^2-4\sqrt{g}-2}{v^2+2}+\sqrt{g}}{g-\frac{gv^2-4\sqrt{g}-2}{v^2+2}}dv = \\
&\int_{\sqrt{\frac{2}{\sqrt{g}}}}^{1}\frac{v^2(g+\sqrt{g})-4\sqrt{g}-2}{2g-4\sqrt{g}-2}dv=\\
&\frac{1}{2g+4\sqrt{g}+2}\left(\frac{g+\sqrt{g}}{3}(1-\frac{2}{\sqrt{g}}\sqrt{\frac{2}{\sqrt{g}}})-(4\sqrt{g}+2)(1-\sqrt{\frac{2}{\sqrt{g}}})\right).     
\end{align*}

The dominant term in the nominator above is $g$ and also in the denominator, it is $6g$. Therefore, $\lim_{g\rightarrow \infty}b(g)=\frac{1}{6}$.

\end{proof}

\begin{proof}[Proof of Proposition~\ref{rev_equiv_2}]

We are interested in the equilibrium signaling strategy $s(v)$. Suppose that this strategy is increasing (so no pooling of types) and differentiable. Then, we can define a differentiable function
$$\tilde{C}(v):=C(s(v)).$$
To figure  out what $\tilde{C}(v)$ is, we have to consider an optimization problem with the first player:
$$\max_{v} Pr[v\geq v_2]v_1-C(s(v))=Pr[v\geq v_2]v_1-\tilde{C}(v).$$
Taking first order conditions with respect to $v$ gives:
$$v_1f(v)-\tilde{C}'(v)|_{v=v_1} = 0,$$ that is, $$v_1f(v_1)=\tilde{C}'(v_1).$$ For the uniform distribution: 
$$v_1=\tilde{C}'(v_1). $$
Using the boundary condition $\tilde{C}(0)=0$ and integrating we get
$$\tilde{C}(v_1)=v_1^2/2.$$
More generally:
$$
\tilde{C}(v_1)=\int_{-\infty}^{v_1}vf(v)dv.
$$

\end{proof}

\section{Analysis of \sysname with $n$ players}
\label{sec:nplayers}

In this section, we consider $n$ players with the same valuation distribution as in the $2$ players case. The optimization problem is now the following: 
$$
    \max_{v} Pr[v\geq \max\{v_2,\cdots,v_n\}]v_1-c(t(v)),
$$

similarly to~\eqref{optimization_2players}. 
By replacing the probability with cumulative distribution, this is equivalent to: 
$$
    \max_{v} F_{n-1}(v)v_1-c(t(v)),
$$

where $F_{n-1}(x)$ is a cumulative distribution function of the random variable $X:=\max\{X_1,\cdots, X_{n-1}\}$. By independence we have 
$$F_{n-1}(x)=F(x)^{n-1}.$$

The first-order condition and plugging in $v=v_1$ gives the following differential equation, similar to~\eqref{opt_2_simplified}:
$$
    f_{n-1}(v_1)v_1-c'(t(v_1))t'(v_1) = 0,
$$

where $f_{n-1}(v_1) = (n-1)v_1^{n-2}$ is a density function of maximum among $n-1$ uniformly distributed random variables. The differential equation w.r.t. $t(v)$ becomes: 
$$
(n-1)v_1^{n-1}+\frac{t'(v_1)}{t^2(v_1)}=0. 
$$

Solving the equation gives $t(v)=\frac{n}{c+(n-1)v^{n}}$. The same boundary condition ensures that $c=0$, that is, $t(v)=\frac{n}{(n-1) v^{n}}$. Each player pays: 
$$
\frac{n-1}{n}\int_{0}^{1}v^{n}dv = \frac{n-1}{n}\frac{v^{n+1}}{n+1}|_{0}^{1}=\frac{n-1}{n(n+1)}. 
$$

Together, the players pay $\frac{n-1}{n+1}$, that converges to $1$ as $n$ converges to infinity. 
Note that the first place in the transaction order is given to the maximum-value player. The average valuation of the maximum value player is $\frac{n}{n+1}$, order statistics. This value also converges to $1$ as $n$ tends to infinity.

The analysis is the same as in the case of $2$ players, until the differential equation that solves score function $s$. 
Instead of~\eqref{differential_score}, for $n$ players we solve:
\begin{equation}\label{differential_score_n}
\hfill (n-1)vv^{n-1}-c'(s)s'(v)=0.\hfill
\end{equation}

For types $v$ with $\frac{n}{(n-1)v^2}\geq\sqrt{g},$ who only use latency, we have the same solution as before $$s(v)=-\frac{n}{(n-1)v^2}.$$

Marginal type investing in bidding is: 
$$
u=\sqrt{\frac{n}{(n-1) \sqrt{v}}}.
$$

Plugging in functional forms of $c$ and $s$ in~\eqref{differential_score_n} gives the same limit results as in Proposition~\ref{ex_post_prop}. 
Next, we show a revenue equivalence for $n$ players. The argument is similar to $2$ players' case. 
Assume that there is an arbitrary signaling technology described by an increasing, differentiable cost function $C(s)$.

\begin{proposition}\label{rev_equiv_n}
    All $n$ players spend the same amount on average for any cost function $C$. 
\end{proposition}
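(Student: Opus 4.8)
The plan is to establish revenue equivalence for the $n$-player ex-post game by the standard envelope-theorem argument, adapted to the fact that here the ``allocation'' is winning the race (getting the top slot) and the ``payment'' is the cost $C(s)$ of producing the equilibrium score. First I would set up the interim problem: in a symmetric equilibrium with strictly increasing score function $s(\cdot)$, a type-$v$ player who mimics type $w$ wins whenever $w$ exceeds the other $n-1$ valuations, i.e. with probability $F(w)^{n-1}$, and pays $C(s(w))$. So the interim utility from reporting $w$ while having value $v$ is $U(v,w) = F(w)^{n-1} v - C(s(w))$, and in equilibrium $v \in \arg\max_w U(v,w)$. Let $U(v) := U(v,v)$ be the equilibrium payoff.

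Next I would apply the envelope theorem: since only the first term depends on the true type $v$, we get $U'(v) = \partial_v U(v,w)|_{w=v} = F(v)^{n-1}$, and integrating from the lowest type gives $U(v) = U(\underline v) + \int_{\underline v}^{v} F(\tau)^{n-1}\, d\tau$. The boundary condition is $U(\underline v) = 0$ (the lowest type wins with probability zero and, by the boundary condition already used in the paper, produces score $-\infty$ at zero cost, hence pays nothing and earns nothing); this pins down $U$ completely in terms of $F$ and $n$ alone. Then the expected payment of a type-$v$ player is $C(s(v)) = F(v)^{n-1} v - U(v)$, which depends only on $F$ and $n$, not on $C$. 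Taking expectations over $v \sim F$ gives the average spend $\int_{\underline v}^{1} \big(F(v)^{n-1}v - \int_{\underline v}^{v}F(\tau)^{n-1}d\tau\big) f(v)\, dv$, a quantity independent of the cost function $C$; for the uniform case this reproduces the $\frac{n-1}{n(n+1)}$ computed above, and summing over the $n$ players gives $\frac{n-1}{n+1}$. This is exactly the $n$-player analogue of Proposition~\ref{rev_equiv_2}, so I would mostly point back to that proof and indicate the single change: $F(w)$ is replaced by $F_{n-1}(w) = F(w)^{n-1}$ throughout.

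The main obstacle is not the envelope manipulation itself but justifying its hypotheses: one must argue that a monotone (weakly increasing) equilibrium score function exists and that the interim utility is absolutely continuous in $v$ with the claimed derivative almost everywhere. The cleanest route avoids differentiability of $s$ altogether — use the standard incentive-compatibility inequality $U(v) \ge U(v') + (F(v'))^{n-1}(v - v')$ obtained by having type $v$ mimic type $v'$ (and symmetrically), which forces $U$ to be convex, hence differentiable a.e. with $U'(v) = F(v)^{n-1}$ at points of differentiability, and absolutely continuous; this is enough to integrate. I would also note monotonicity of the allocation follows from the same inequality (if $v > v'$ then the winning probabilities must be ordered), so no separate existence argument for a monotone $s$ is needed beyond what the construction in the preceding subsection already provides. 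The remaining care is simply that the boundary term vanishes, which follows from the same reasoning as in the two-player case.
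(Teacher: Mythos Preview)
Your argument is correct and is essentially the paper's proof: the paper writes $\tilde{C}(v):=C(s(v))$, takes the first-order condition of the interim problem to get $\tilde{C}'(v)=(n-1)vf(v)F(v)^{n-2}$, and integrates with $\tilde{C}(0)=0$ to obtain $\tilde{C}(v)=\int_0^{v}(n-1)wf(w)F(w)^{n-2}\,dw$, which is exactly the integration-by-parts dual of your envelope computation $U'(v)=F(v)^{n-1}$ and $C(s(v))=vF(v)^{n-1}-U(v)$. The only difference is that the paper simply assumes $s$ is increasing and differentiable, while you add the standard IC-inequality argument to avoid differentiability of $s$; the substance is identical.
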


\begin{proof}
We are interested in the equilibrium signaling strategy $s(v)$. Suppose that this strategy is increasing (so no pooling of types) and differentiable. Then, we can define a differentiable function
$$\tilde{C}(v):=C(s(v)).$$
To figure  out what $\tilde{C}(v)$ is, we have to consider an optimization problem of the first player:
$$\max_{v} Pr[v\geq \max\{v_2,\cdots,v_n\}]v_1-\tilde{C}(v)=F(v)^{n-1}v_1-\tilde{C}(v).$$
Taking first order conditions with respect to $v$:
\begin{equation*}
\hfill \left[(n-1)v_1f(v)F(v)^{n-2}-\tilde{C}'(v)\right]\big|_{v=v_1}=0, \hfill
\end{equation*}

For the uniform distribution, we get: 
$$(n-1)v_1^{n-1}=\tilde{C}'(v_1). $$
Using the boundary condition $\tilde{C}(0)=0$ and integrating we get
$$\tilde{C}(v_1)=\frac{(n-1)v_1^{n}}{n}.$$
More generally:
$$\tilde{C}(v_1)=\int_{-\infty}^{v_1}(n-1)vf(v)F(v)^{n-2}dv.$$
\end{proof}

\section{Comparison of \sysname with a Pure Bidding Policy}\label{benchmark}
We now compare \sysname to what to a pure bidding policy. Recall that for the bidding policy, all transactions sent in fixed time intervals of length $g$ are collected, and sorted in decreasing order of their bids. This effectively simulates a first-price all-pay auction for each interval. We note this can be thought of as a quantized version of \sysname, because it produces the same sequence that would be produced by first rounding off each transaction's arrival timestamp to the nearest multiple of $g$ and then applying \sysname.

Generically speaking, a first-price auction where only the winning bidder pays and first-price all-pay auctions are both {\it payoff equivalent} for Bayesian-Nash incentive compatible mechanisms, (see e.g., \cite{rev_payoff_equivalence}).
In our setting, the following result holds for each individual arbitrage opportunity. 
\begin{proposition}[see \cite{rev_payoff_equivalence}]\label{RE_FPA_FPAAP}
The expected payoff of the bidding game where the only the highest bidder pays their bid is equal to the expected payoff in the bidding game where the highest bidder wins but all players pay their bids, independently of valuation distributions. 
\end{proposition}

For simplicity, to compare \sysname with a pure bidding policy, we consider two players. It is straightforward to generalize to more parties. For a given arbitrage opportunity, two cases arise as described below depending on whether transactions can be submitted within the same $g$-time interval as the arbitrage opportunity or not:
\begin{enumerate}
    \item \underline{Both players can submit their transactions within the same $g$ interval.}
    \sloppy For the pure bidding policy, if both players can get their transaction submitted inside the same $g$-time interval as the arbitrage opportunity, then they will both compete for it. It is easy to see that when the valuations of the two parties are the same, the bidding strategy for the pure-bidding policy vs the ex-ante latency with bidding policy will be the same. In other words, in this scenario, \sysname maintains the economic properties of the first-price auction pure-bidding policy.
    
    \item \underline{Only one player can get its transaction within the same $g$ interval.} 
    If only one player can get its transaction inside the same $g$-time interval as the arbitrage opportunity, then in the pure-bidding policy, that player can pay a $0$ bid and still take advantage of it. In contrast, since \sysname does not require discrete boundaries, both players will always have $g$ time to submit their transactions (recall that bidding can be used to get priority over any transaction received up to $g$ time earlier). This means that even for a reasonably small $g$ (say 0.5 sec), both parties will always be able to compete for the opportunity. In equilibrium, this results in bids equal to value of the arbitrage.
\end{enumerate}

\mypara{Analysis for the second case.}
Suppose the first party (denoted by A) can reach the sequencer in $s_1$ time, and the second party (denoted by B) can reach
in $s_2$ time, with $s_1<s_2$. Then, with the pure-bidding policy, A can wait until $g-s_1$ seconds pass since the beginning of a new block creation, 
and send its transaction to the sequencer at exactly $g-s_1$, while B has to send its transaction by time
$g-s_2$ in order to be included in the same block. 

Assuming that arbitrage opportunities are uniformly distributed over the $g$-interval, this means that, with probability $\frac{g-s_1-(g-s_2)}{g} = \frac{s_2-s_1}{g}$, B has no chance to win the race against A, even if it values the arbitrage opportunity much more than A. 
When $g$ is large (e.g., on Ethereum with $12$ sec block times), this latency advantage is not a big issue, as A would only have an advantage with probability $(s_2-s_1)/12$. In contrast, for faster blockchains, or layer-2 rollups which have shorter block-times to achieve scalability, this latency advantage can be significantly more important in the pure-bidding policy vs in \sysname. For instance, when $g = 0.5$sec, A's latency advantage is $24$ times greater than what it was in Ethereum. This means that compared to \sysname, a pure-bidding strategy will either result in substantial latency competition (when $g$ is small) or will not be able to provide low transaction finalization time (since $g$ will be large). 

\section{Discussion on Sequencer Decentralization}
\newcommand{\pk}{\textsf{pk}}
\newcommand{\sk}{\textsf{sk}}

We now briefly discuss how \sysname can be supported by a decentralized sequencer---i.e., a committee of $\ell$ sequencers (of which at most some $f$ can be dishonest). We only provide possible implementations here; a formal rigorous analysis is outside the scope of this paper. 

In the decentralized setting, transactions to be sequenced are now submitted by users to all sequencers instead of just one. Note that as before, threshold decryption techniques can be used for transaction privacy before ordering. 

The most natural way to support \sysname in a decentralized setting is to have a protocol for sequencers to agree on both the timestamp and the bid of transactions. After this is done, the \sysname policy can simply be applied on the consensus output of the decentralized committee to obtain the final ordering. Agreeing on the bid is easy since we can have the same bid be submitted to all sequencers for a given transaction. Agreeing on the timestamp is a more challenging problem since the same transaction can arrive at different nodes. While it adds significant complexity, one potential technique here is to employ a fair-ordering protocol (this can be as a simple as e.g., computing the median timestamp~\cite{zhang2020oligarchy,kursawe2020wendy} or support more complicated techniques as in~\cite{kelkar2020order, kelkar2021themis, cachin2022quick}). We leave the formal analysis of such a decentralized \sysname implementation to future work.

\section{Conclusion}
We designed \sysname: a policy for transaction ordering that takes into account both transaction arrival times and bids. We showed that any ordering scheme that guarantees the independence of different latency races is a generalized scoring rule. By choosing a suitably designed mixture of timestamps and bids, we showed the economic efficiency of the system: transaction senders spend most of their resources on bidding instead of latency improvement, which can later be used by the protocol for improvement and development. 

\medskip

\mypara{Acknowledgments.}
We are grateful to Lee Bousfield, Chris Buckland, Potuz Heluani, Raul Jordan, Mallesh Pai, Ron Siegel, Terence Tsao as well as participants at the Swiss National Bank Technology and Finance Seminar for interesting discussions and valuable feedback.

\bibliography{refs}

\end{document}